\newcommand{\ZZ}{\mathbb{Z}}
\newtheorem{theorem}{Theorem}
\newtheorem{lemma}{Lemma}
\newtheorem{proposition}{Proposition}
\title{Periodicity of identifying codes in strips}
\author{Minghui Jiang\medskip\\
\small
Department of Computer Science,
Utah State University,
Logan, UT 84322-4205, USA\\
\small
\texttt{mjiang@cc.usu.edu}}
\begin{document}

\maketitle

\begin{abstract}
An identifying code in a graph is a subset of vertices 
having a nonempty and distinct intersection with
the closed neighborhood of every vertex.
We prove that the infimum density of any identifying code in $S_k$
(an infinite strip of $k$ rows in the square grid)
can always be achieved by a periodic identifying code with pattern length
at most $2^{4k}$.
Assisted by a compute program implementing Karp's algorithm
for minimum cycle mean,
we find a periodic identifying code in $S_4$ with the minimum density $11/28$,
and a periodic identifying code in $S_5$ with the minimum density $19/50$.

\smallskip\textbf{Keywords:}
identifying code, minimum cycle mean.
\end{abstract}

\section{Introduction}

For $d \ge 1$,
the \emph{grid} $G_d$ is the (infinite) graph with vertex set
$\ZZ^d$, and with edges between vertices with (Euclidean) distance $1$.
For $k \ge 1$,
the \emph{strip} $S_k$ is the subgraph of $G_2$
induced by the vertex subset $\ZZ\times\ZZ_k$,
where $\ZZ_k := \{0,\ldots,k-1\}$.
For $l \ge 1$,
any subgraph of $S_k$ induced by
$\{j_1,\ldots,j_2\}\times\ZZ_k$
with $j_2-j_1+1 = l$
is called a \emph{bar} with length $l$, or an \emph{$l$-bar}.
For $i\in\ZZ_k$ and $j\in\ZZ$,
the subgraphs of $S_k$ induced by
$\ZZ\times\{i\}$
and
$\{j\}\times\ZZ_k$,
respectively,
are called
a \emph{row} and a \emph{column} of $S_k$;
similarly we also talk about rows and columns of a bar.

Let $G$ be a (finite or infinite) graph.
For any vertex $v \in V(G)$,
the \emph{open neighborhood} $N(v)$
is the subset of vertices adjacent to $v$ in $G$,
and the \emph{closed neighborhood} $N[v]$ is $N(v) \cup \{v\}$.
For $r \ge 0$ and $v \in V(G)$,
the \emph{ball} of radius $r$ centered at $v$,
denoted by $B_r(v)$,
is the set of vertices with distance at most $r$ from $v$ in $G$.
In particular, $B_0(v) = \{ v \}$ and $B_1(v) = N[v]$.
A vertex subset $C \subseteq V(G)$ is an \emph{identifying code} in $G$ if
\begin{enumerate}

\item
for each vertex $v \in V(G)$, $N[v] \cap C \neq \emptyset$,

\item
for each pair of distinct vertices $u,v \in V(G)$,
$N[u] \cap C \neq N[v] \cap C$.

\end{enumerate}

Let $v_0$ be an arbitrary vertex in $G$.
For any $C \subseteq V(G)$,
the \emph{upper density}
and \emph{lower density}
of $C$ in $G$ are, respectively,
$$
\overline{d}(C,G) := \limsup_{r\to\infty} \frac{|C\cap B_r(v_0)|}{|B_r(v_0)|}
\quad\textup{and}\quad
\underline{d}(C,G) := \liminf_{r\to\infty} \frac{|C\cap B_r(v_0)|}{|B_r(v_0)|}.
$$
If these two numbers are equal, then their common value is simply called
the \emph{density} of $C$ in $G$,
$$
d(C,G) := \lim_{r\to\infty} \frac{|C\cap B_r(v_0)|}{|B_r(v_0)|}.
$$
In particular, if $G$ is a finite graph, then $d(C,G)$ always exists, and
$$
d(C,G) = \frac{|C|}{|V(G)|}.
$$

The infimum density of an identifying code in $G$ is customarily defined
as
\begin{equation}\label{eq:def}
d^*(G) := \inf_C \overline{d}(C,G),
\end{equation}
where $C$ ranges over all identifying codes in $G$.
Here the upper density $\overline{d}$ is used because
the density $d$ does not always exist when $G$ is an infinite graph.
We show in this paper that if $G$ is $S_k$, for any $k \ge 1$,
then the infimum density can always be achieved by a
\emph{periodic} identifying code
repeating a bar pattern
whose length $l$ is bounded by a function of $k$,
and hence
\begin{equation}\label{eq:dmin}
d^*(G) = \min_C d(C,G),
\end{equation}
where $C$ ranges over all periodic identifying codes $C$, whose densities exist.

Identifying codes in the square grid $G_2$ and its subgraphs have been
extensively studied.
The concept was introduced by
Karpovsky, Chakrabarty, and Levitin~\cite{KCL98}
for its application in fault diagnosis of multiprocessor systems.
Cohen et~al.~\cite{CHLZ99,CGHLMPZ99} proved an upper bound of
$d^*(G_2) \le 7/20$
by giving two periodic identifying codes achieving this density.
Subsequently,
Ben-Haim and Litsyn~\cite{BL05} proved the matching lower bound of
$d^*(G_2) \ge 7/20$.

\begin{figure}[htbp]
\centering\includegraphics{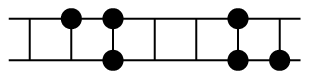}\qquad\includegraphics{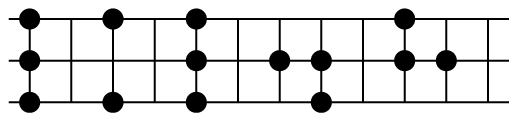}
\caption{Bar patterns of periodic identifying codes
with densities $3/7$ and $7/18$, respectively,
in $S_2$ and $S_3$.}
\label{fig:s2s3}
\end{figure}

Daniel, Gravier, and Moncel~\cite{DGM04} initiated the study of identifying
codes in strips, and proved that $d^*(S_1) = 1/2$ and $d^*(S_2) = 3/7$.
The infimum density of $1/2$ for $S_1$ is achieved
by a periodic identifying code with pattern length $2$,
consisting of every other vertex in $S_1$.
The infimum density of $3/7$ for $S_2$ is achieved
by a periodic identifying code with pattern length $7$,
as illustrated in Figure~\ref{fig:s2s3} left.
Bouznif, Darlay, Moncel, and Preissmann~\cite{BDMP11}
determined that $d^*(S_3) = 7/18$ with a computer-assisted proof,
but left the case of $d^*(S_4)$ open because their program
(based on matrix power)
consumed too much memory.
Refer to Figure~\ref{fig:s2s3} right for the periodic identifying code
with pattern length $12$ that achieves this density.
In his Ph.D. dissertation,
Bouznif~\cite[Section~7.2.2]{Bo12} also proved, among other results,
that the infimum density of a circular strip of height $4$ and length $14n$
is $11/28$,
which implies the upper bound that $d^*(S_4) \le 11/28$.
Recently,
Bouznif, Havet, and Preissmann~\cite{BHP16}
presented a combinatorial proof of $d^*(S_3) = 7/18$
via a sophisticated charging argument.
Extending the previous result of $d^*(G_2) = 7/20$ for the square grid $G_2$,
Bouznif, Havet, and Preissmann~\cite{BHP16} also proved that,
for any $k \ge 1$,
$$
\frac7{20} + \frac1{20k} \le d^*(S_k) \le \min\left\{ \frac25,\,\,
	\frac7{20} + \frac3{10k} \right\}.
$$

\begin{figure}[htbp]
\centering\includegraphics{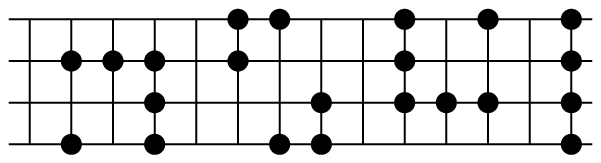}\qquad\includegraphics{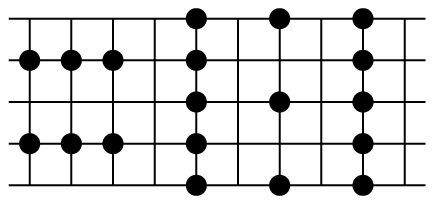}
\caption{Bar patterns of periodic identifying codes
with densities $11/28$ and $19/50$, respectively,
in $S_4$ and $S_5$.}
\label{fig:s4s5}
\end{figure}

In this paper, we prove that $d^*(S_4) = 11/28$ and $d^*(S_5) = 19/50$,
with the assistance of a computer program justified by~\eqref{eq:dmin}.
Refer to Figure~\ref{fig:s4s5}:
the infimum density of $11/28$ for $S_4$ is achieved
by a periodic identifying code with pattern length $14$
(which is exactly the optimal pattern for a circular strip of height $4$ and length $14n$ that Bouznif~\cite[Figure~7.14]{Bo12} previously discovered);
the infimum density of $19/50$ for $S_5$ is achieved
by a periodic identifying code with pattern length $10$.
It is interesting to note that
$d^*(S_4) = 0.3928\ldots$ is greater than both
$d^*(S_3) = 0.3888\ldots$ and
$d^*(S_5) = 0.38$,
in contrast to the strict monotonicity of $d^*(S_1) > d^*(S_2) > d^*(S_3)$,
and to the asymptotic bound of $d^*(S_k) = 7/20 + \Theta(1/k)$.

\section{An alternative interpretation of periodic identifying codes in strips}

Consider any subset $C$ of vertices in a graph $G$.
For any pair of distinct vertices $u,v\in V(G)$,
if
$N[u] \cap C \ne \emptyset$,
$N[v] \cap C \ne \emptyset$,
and
$N[u] \cap N[v] = \emptyset$,
then
$N[u] \cap C \ne N[v] \cap C$.
Thus to verify whether $C$ is an identifying code in $G$,
it suffices to verify that
\begin{enumerate}

\item
for each vertex $v \in V(G)$, $N[v] \cap C \neq \emptyset$,

\item
for each pair of distinct vertices $u,v \in V(G)$
with $N[u] \cap N[v] \ne \emptyset$,
$N[u] \cap C \neq N[v] \cap C$.

\end{enumerate}
Any two vertices with intersecting closed neighborhoods
are within distance $2$ from each other.
In particular, in the strip $S_k$,
such a pair of vertices must be located in some $3$-bar,
and the union of their closed neighborhoods is contained in some $5$-bar.

Let $R$ be any bar with length $l \ge 3$ in the strip $S_k$,
and let $R'$ be the sub-bar with length $l-2$ consisting of
the middle columns (except the first column and the last column) of $R$.
Then the closed neighborhood (in $S_k$) of each vertex in $R'$
is contained in $R$.
We say that a subset $P$ of vertices in $R$ is a \emph{barcode} of $R$,
if $P$ satisfies the two conditions of an identifying code
\emph{locally} for the vertices in $R'$,
that is,
for each vertex $v \in R'$,
$N[v] \cap P \ne \emptyset$,
and for each pair of distinct vertices $u,v\in R'$,
$N[u] \cap P \ne N[v] \cap P$.

Construct an edge-weighted directed graph $H_k$ as follows.
For each barcode $P$ of any $4$-bar,
let $H_k$ have a corresponding vertex.
For each barcode $Q$ of any $5$-bar,
which induces two barcodes $P'$ and $P''$ of $4$-bars
consisting of the first $4$ columns and the last $4$ columns, respectively,
of the $5$-bar,
let $H_k$ have an edge,
directed from the vertex
corresponding to $P'$,
to the vertex
corresponding to $P''$,
then set the weight of this edge to the number of vertices of $Q$
in the last column of the $5$-bar.
Then $H_k$ has at most $2^{4k}$ vertices and at most $2^{5k}$ edges,
where each vertex is incident to at most $2^k$ incoming edges
and at most $2^k$ outgoing edges,
and each edge has an integer weight between $0$ and $k$.

Note that $H_k$ is strongly connected.
Given any barcode $P$ of a $4$-bar,
we can always extend it to a barcode $Q$ of a $5$-bar,
which consists of the vertices of $P$ in the first $4$ columns
and all $k$ vertices in the last column.
This implies that $H_k$ contains an edge progression (a directed path)
of at most $4$ edges
from any vertex $v$, to the vertex $s$ corresponding to the barcode
containing all $4k$ vertices of a $4$-bar.
Symmetrically, there is also an edge progression of at most $4$ edges
from $s$ to any vertex $v$ in $H_k$.
Note also that $H_k$ contains self-loops,
for example, from $s$ to itself.
Indeed if a vertex is connected to itself by an edge,
then the barcode corresponding to this vertex must include
either all or none of the $4$ vertices in each row of the $4$-bar.

Define the \emph{mean weight} of
any finite edge progression in an edge-weighted directed graph as
the total weight of the edges divided by the number of edges
(allowing duplicates).
Define the \emph{minimum cycle mean} of any edge-weighted directed graph
as the minimum mean weight of any cycle in it.
The following proposition is easy to prove:

\begin{proposition}\label{prp}
Each bar pattern of a periodic identifying code in $S_k$
corresponds to a cycle in $H_k$,
and vice versa.
The minimum density of a periodic identifying code in $S_k$
is exactly the minimum cycle mean of $H_k$ divided by $k$.
\end{proposition}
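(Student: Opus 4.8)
The plan is to identify a periodic identifying code with its bi-infinite, $l$-periodic sequence of columns $\{c_j\}_{j\in\ZZ}$, where $c_j\subseteq\ZZ_k$ records the code vertices in column $j$ and $c_{j+l}=c_j$; its density is then $\frac1{kl}\sum_{j=0}^{l-1}|c_j|$. The central reduction, using the observation recorded before the statement, is that such a $C$ is an identifying code in $S_k$ \emph{if and only if} every length-$5$ window $(c_j,c_{j+1},c_{j+2},c_{j+3},c_{j+4})$ is a barcode of the corresponding $5$-bar. First I would prove this equivalence: every vertex lies in the middle three columns of some $5$-bar window, and any two vertices within distance $2$ differ by at most $2$ columns and so lie together in the middle three columns of some window; since the closed neighborhood of a middle-column vertex, and the union of closed neighborhoods of two middle-column vertices, are contained in that $5$-bar, the local intersections with $C$ coincide with the global ones there, so domination and separation of all relevant pairs hold globally exactly when every window satisfies the barcode conditions.

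With this reduction in hand, the correspondence with $H_k$ is a matter of bookkeeping. Given a periodic identifying code, let $v_j$ be the $4$-bar barcode $(c_j,c_{j+1},c_{j+2},c_{j+3})$, which is a vertex of $H_k$ because it is the first four columns of the valid $5$-bar barcode above; each $5$-bar window then supplies the edge $v_j\to v_{j+1}$ of weight $|c_{j+4}|$, and periodicity ($v_{j+l}=v_j$) makes $v_0\to v_1\to\cdots\to v_{l-1}\to v_0$ a closed walk of length $l$. Its total weight telescopes to $\sum_{j=0}^{l-1}|c_{j+4}|=\sum_{j=0}^{l-1}|c_j|$, so its mean weight is exactly $k$ times the density of $C$. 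Conversely, given any closed walk $v_0\to\cdots\to v_{l-1}\to v_0$, the overlap built into the edge definition (the last four columns of the $5$-bar yielding $v_j\to v_{j+1}$ equal $v_{j+1}$, which equals the first four columns of the $5$-bar yielding $v_{j+1}\to v_{j+2}$) forces a single column sequence $\{x_j\}$ with $v_j=(x_j,\dots,x_{j+3})$; closing the walk gives $x_{j+l}=x_j$, an $l$-periodic code whose every $5$-bar window is one of the walk's $5$-bars and hence a barcode, so by the reduction it is an identifying code, of density equal to the walk's mean weight divided by $k$.

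Finally I would deduce the numerical statement. The two constructions show that the set of values $\{\,k\cdot d(C,S_k)\,\}$ over periodic identifying codes $C$ coincides with the set of mean weights of closed walks in $H_k$ (these closed walks being what the proposition loosely calls cycles). Since the minimum cycle mean of $H_k$ is attained on a simple cycle and equals the infimum of the mean weight over all closed walks, and since $H_k$ does contain cycles (it is strongly connected with self-loops), it follows that $k$ times the minimum density of a periodic identifying code equals the minimum cycle mean of $H_k$; dividing by $k$ gives the claim.

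The main obstacle is the equivalence of the first paragraph: one must verify that the finitely many local barcode conditions carried by the $5$-bar windows genuinely capture \emph{all} of the global domination and separation requirements—in particular that no within-distance-$2$ pair escapes the middle three columns of every window, and that replacing a global closed neighborhood by its restriction to a $5$-bar never alters its intersection with $C$. Once this containment-and-coverage check is made precise, the remaining steps are routine algebra (the telescoping weight sum) together with standard facts about the minimum cycle mean.
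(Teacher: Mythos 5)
Your proof is correct, and it is essentially the argument the paper intends: the paper states this proposition without proof (calling it ``easy to prove''), and its Section~2 sets up exactly the machinery you use---the reduction of the identifying-code conditions to local checks on $5$-bars, the $4$-bar barcodes as vertices of $H_k$, and the $5$-bar barcodes as overlapping edges. Your write-up supplies the details the paper omits, including the two points that genuinely need care and that you handle correctly: the window-equivalence (every within-distance-$2$ pair sits in the middle three columns of some $5$-bar window, so local separation implies global separation), and the reading of ``cycle'' as closed walk, reconciled with the minimum cycle mean via the standard decomposition-into-simple-cycles (mediant) inequality, which is the same argument the paper itself invokes later in the proof of Theorem~\ref{thm:periodic}.
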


\section{Existence of a periodic identifying code achieving the infimum density}

We next show that the infimum density of an identifying code in $S_k$
can always be achieved by a periodic identifying code with pattern length
bounded by a function of $k$:

\begin{theorem}\label{thm:periodic}
For any $k \ge 1$,
there is a periodic identifying code $C^*$
with pattern length $l \le 2^{4k}$ in $S_k$
satisfying $d(C^*,S_k) = d^*(S_k)$.
\end{theorem}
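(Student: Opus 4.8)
The plan is to read Theorem~\ref{thm:periodic} off Proposition~\ref{prp} together with the classical fact that the minimum cycle mean of a finite edge-weighted digraph is attained by a \emph{simple} cycle, after first checking that no identifying code can beat the best periodic one. Write $\mu_{\min}$ for the minimum cycle mean of $H_k$ and put $\mu := \mu_{\min}/k$; by Proposition~\ref{prp} this is the minimum density of a periodic identifying code in $S_k$. The mean weight of any cycle is a convex combination of the mean weights of the simple cycles in its decomposition, so $\mu_{\min}$ is already attained by some simple cycle $\gamma$. As $H_k$ has at most $2^{4k}$ vertices, $\gamma$ has at most $2^{4k}$ edges, and the periodic code $C^*$ it induces through Proposition~\ref{prp} has one column of bar pattern per edge, hence pattern length $l \le 2^{4k}$ and $d(C^*,S_k)=\mu$. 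Since a periodic code is in particular an identifying code with $\overline{d}=d$, definition~\eqref{eq:def} gives $d^*(S_k)\le\mu$ at once; the whole theorem then reduces to proving $\overline{d}(C,S_k)\ge\mu$ for every identifying code $C$.

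For that lower bound I would isolate one estimate about $H_k$. Because every edge weight lies in $[0,k]$ and every cycle has mean weight at least $\mu_{\min}$, any edge progression of length $n$ has total weight at least $\mu_{\min}\,(n-2^{4k})$: greedily delete a cycle whenever a vertex repeats, so that each deleted cycle contributes weight at least $\mu_{\min}$ times its length, while the nonnegative remnant is a simple path with fewer than $|V(H_k)|\le 2^{4k}$ edges.

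Now fix an arbitrary identifying code $C$ and a long sub-bar of $L$ consecutive columns. Its restriction to the successive $4$-bars is a bona fide sequence of barcodes --- the closed neighborhood of each vertex in the interior two columns of a $4$-bar lies inside that $4$-bar, so the local identifying conditions are inherited from $C$ --- and therefore traces an edge progression of length $L-4$ in $H_k$ whose total weight equals the number of code vertices in all but the first four of the $L$ columns. The estimate above thus bounds $|C\cap(\text{sub-bar})|$ below by $\mu_{\min}\,(L-4-2^{4k})$, i.e.\ by $\mu_{\min}L$ minus a constant depending only on $k$.

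It remains to convert this column count into a ball density. A ball $B_r(v_0)$ in $S_k$ contains in full every column whose distance from the column of $v_0$ is at most $r-k+1$, a sub-bar of length $2r-2k+3$, while $|B_r(v_0)|\le(2r+1)k$. Feeding $L=2r-2k+3$ into the previous bound and dividing by $|B_r(v_0)|$, the $O(k)$ and $O(2^{4k})$ boundary corrections become negligible as $r\to\infty$, and the ratio tends to $\mu_{\min}/k=\mu$; hence $\underline{d}(C,S_k)\ge\mu$ and a fortiori $\overline{d}(C,S_k)\ge\mu$, which closes the argument. The one place demanding care is this last bookkeeping of the additive $O(k)$ and $O(2^{4k})$ corrections, ensuring they are genuinely independent of $r$ so that they wash out in the limit; the rest is a direct combination of Proposition~\ref{prp} with the cycle-mean estimate.
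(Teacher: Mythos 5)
Your proof is correct, but it follows a genuinely different route from the paper's. The paper argues by approximation and compactness: for each $n$ it picks a near-optimal (possibly non-periodic) code $A_n$, finds an $n$-bar whose local density is at most $\overline{d}(A_n,S_k)$, seals that bar by adjoining four full columns so that (by the superset Lemma~\ref{lem:super}) repeating it periodically still identifies, thus obtaining periodic codes $C_n$ with $d(C_n,S_k)\le d^*(S_k)+5n^{-1}$; it uses the cycle-decomposition argument only to shorten each pattern to length at most $2^{4k}$, and concludes by pigeonhole: among the finitely many periodic codes with bounded pattern length, some $C^*$ recurs infinitely often and hence has density exactly $d^*(S_k)$. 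You instead prove directly that $d^*(S_k)$ equals the minimum cycle mean of $H_k$ divided by $k$: the upper bound comes from a simple minimizing cycle (which automatically has at most $2^{4k}$ edges), and the lower bound from an observation the paper never needs, namely that an \emph{arbitrary} identifying code, restricted to successive $4$-bars, traces a bi-infinite walk in $H_k$ (your local-neighborhood check that the restrictions really are barcodes, and that consecutive ones are joined by edges, is exactly what is required), combined with the estimate that every length-$n$ edge progression has weight at least $\mu_{\min}(n-2^{4k})$, which is valid because deleted simple cycles have mean at least $\mu_{\min}$ and the leftover simple path has nonnegative weight. Your route buys a strictly stronger conclusion: every identifying code has \emph{lower} density at least $d^*(S_k)$, so the infimum in~\eqref{eq:def} is unchanged if $\overline{d}$ is replaced by $\underline{d}$, and the reduction to the minimum cycle mean of $H_k$ is certified without any detour through near-optimal codes. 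What the paper's softer cut-and-paste argument (modeled on the Pach--Agarwal packing hint) buys in exchange is that it never has to encode non-periodic codes into the transfer graph at all, so it adapts to settings where only periodic structures have a clean combinatorial encoding; in the present setting, though, your argument is the more direct and informative of the two.
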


The proof follows the line of reasoning
of~\cite[Hint for Exercise~3.2]{PA95},
where a similar question on the (lattice) packing density of a convex body in
the plane is considered.
The following lemma is folklore:

\begin{lemma}\label{lem:super}
For any graph $G$,
if $C \subseteq V(G)$ is an identifying code in $G$,
then any subset $D$ with $C \subseteq D \subseteq V(G)$ is also an identifying
code in $G$.
\end{lemma}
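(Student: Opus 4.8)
The plan is to verify the two defining conditions of an identifying code separately for $D$, exploiting the inclusion $C \subseteq D$ together with the fact that $C$ already satisfies both conditions. The first (coverage) condition is immediate: for any vertex $v \in V(G)$, monotonicity of intersection gives $N[v] \cap C \subseteq N[v] \cap D$, and since $N[v] \cap C \ne \emptyset$ we conclude $N[v] \cap D \ne \emptyset$.

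The second (separation) condition is the step I expect to carry the real content, because a priori one might worry that enlarging the code could collapse two previously distinguished closed neighborhoods. The key observation that resolves this is that the vertex witnessing the separation under $C$ is itself a code vertex lying in exactly one of the two closed neighborhoods, and this membership pattern depends only on $G$, not on the code. Concretely, I would fix distinct $u,v \in V(G)$; since $C$ is an identifying code, $N[u] \cap C \ne N[v] \cap C$, so some vertex $w$ lies in the symmetric difference, say (without loss of generality) $w \in N[u] \cap C$ and $w \notin N[v] \cap C$. Because $w \in C$, the condition $w \notin N[v] \cap C$ forces $w \notin N[v]$. Then $w \in C \subseteq D$ together with $w \in N[u]$ yields $w \in N[u] \cap D$, while $w \notin N[v]$ yields $w \notin N[v] \cap D$, so $N[u] \cap D \ne N[v] \cap D$.

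Combining the two verifications shows that $D$ is an identifying code. The only subtlety, which I would flag explicitly, is the one just used: the separating witness is a genuine element of exactly one closed neighborhood, so it cannot be neutralized by adding further vertices to the code. No structural hypotheses on $G$ beyond the definitions enter anywhere, which is precisely why the claim holds for arbitrary finite or infinite graphs.
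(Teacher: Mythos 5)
Your proof is correct and takes essentially the same approach as the paper: both verify the two defining conditions for $D$ directly from the inclusion $C \subseteq D$, the coverage condition by monotonicity and the separation condition by showing that distinguishability under $C$ persists under $D$. The only difference is that you spell out, via an explicit witness $w$ in the symmetric difference, the implication $N[u] \cap C \ne N[v] \cap C \implies N[u] \cap D \ne N[v] \cap D$, which the paper's proof states without elaboration.
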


\begin{proof}
Since $C \subseteq D$, we have for each vertex $v \in V(G)$,
$$
N[v] \cap C \ne \emptyset \implies N[v] \cap D \ne \emptyset,
$$
and for each pair of distinct vertices $u,v \in V(G)$,
$$
N[u] \cap C \ne N[v] \cap C \implies N[u] \cap D \ne N[v] \cap D.
$$
The lemma then follows by the definition of identifying codes.
\end{proof}

Now we proceed with the proof of Theorem~\ref{thm:periodic}.
Fix $k \ge 1$.
For each $n \ge 4$, we will construct a periodic identifying code $C_n$ in
$S_k$ with pattern length at most $2^{4k}$ such that
\begin{equation}\label{eq:CnSk}
d(C_n, S_k) \le d^*(S_k) + 5 n^{-1}.
\end{equation}

Fix $n \ge 4$.
By the definition of $d^*$ in~\eqref{eq:def},
there exists a sequence of
(not necessarily periodic) identifying codes in $S_k$,
whose upper densities tend arbitrarily close to $d^*(S_k)$.
In particular,
we can find $A_n$ such that
$$
\overline{d}(A_n, S_k) \le d^*(S_k) + n^{-1}.
$$
In the following we will derive $C_n$ from $A_n$.

Partition $S_k$ into an infinite sequence of disjoint $n$-bars $R[i]$,
induced by disjoint vertex subsets
$V[i] := \{ in,\ldots,(i+1)n-1 \}\times\ZZ_k$, $i \in \ZZ$.
Let $A[i] := A_n \cap V[i]$.
Then for all $i\in\ZZ$,
$$
d(A[i], R[i]) = \frac{|A[i]|}{nk},
$$
and hence there exists $j \in \ZZ$ such that
$$
d(A[j], R[j]) \le \overline{d}(A_n, S_k).
$$

Let $Q := \{ nj,\ldots,nj+3 \}\times\ZZ_k$
and $Q' := \{ n(j+1),\ldots,n(j+1)+3 \}\times\ZZ_k$
be the vertices in the first $4$ columns of $R[j]$ and of $R[j+1]$,
respectively.
By Lemma~\ref{lem:super},
$A_n \cup Q \cup Q'$ remains an identifying code in $S_k$.
Let $P := A[j] \cup Q$.
Then simply repeating the pattern $P$ of $R[j]$
results in an identifying code too.
Let $C_n$ denote this periodic identifying code in $S_k$,
with pattern length $n$.
Note that $|P| \le |A[j]| + 4k$.
Thus
$$
d(C_n, S_k)
= \frac{|P|}{nk}
\le \frac{|A[j]| + 4k}{nk}
= d(A[j], R[j]) + 4n^{-1}
\le \overline{d}(A_n, S_k) + 4n^{-1}
\le d^*(S_k) + 5 n^{-1},
$$
as desired in~\eqref{eq:CnSk}.

Recall Proposition~\ref{prp} on the equivalence between
patterns of periodic identifying codes in $S_k$
and cycles in $H_k$.
Since $H_k$ has at most $2^{4k}$ vertices,
any cycle in $H_k$ with more than $2^{4k}$ edges must repeat some vertex.
Then the edges of the cycle can be partitioned to form two shorter cycles,
and the mean weights of the two parts cannot be both greater than that of
the whole.
Correspondingly,
if the length $n$ of the pattern $P$ exceeds $2^{4k}$,
then we can replace $P$ by a shorter pattern with equal or smaller density.
Thus we can assume, without loss of generality, that $C_n$ is a periodic
identifying code with pattern length at most $2^{4k}$.

We have obtained an infinite sequence of periodic identifying codes $C_n$,
for $n \ge 4$,
with pattern length at most $2^{4k}$.
Since $k$ is finite,
the number of distinct periodic identifying codes with pattern length at most
$2^{4k}$ is finite.
So there must exist a periodic identifying code $C^*$
with pattern length at most $2^{4k}$,
such that $C_{n_i}$ is identical to $C^*$
for an infinite sequence of increasing positive integers $n_i$,
$i = 1,2,\dots$.
Then by~\eqref{eq:CnSk} we have $d(C^*, S_k) = d(C_{n_i}, S_k)
\le d^*(S_k) + 5 n_i^{-1}$ for $i = 1,2,\ldots$,
and consequently
$d(C^*, S_k) = d^*(S_k)$.
This completes the proof of Theorem~\ref{thm:periodic}.

\section{An algorithm for finding a minimum-density periodic identifying code}

We briefly review Karp's algorithm~\cite{Ka78}
for computing the minimum cycle mean $\lambda^*$ of
a strongly connected directed graph $G$
with $n$ vertices and $m$ edges,
where each edge $e\in E(G)$ has a weight $f(e)$.
Let $s$ be an arbitrary vertex in $G$.
For each vertex $v \in V(G)$, and each integer $k \ge 0$,
define $F_k(v)$ as the minimum weight of any edge progression of length $k$
from $s$ to $v$;
if no such edge progression exists, then $F_k(v) := \infty$.
Karp proved~\cite[Theorem~1]{Ka78}
(see also~\cite[Problem~24.5]{CLRS09} for a step-by-step sketch of the main
argument in this elegant proof)
that
\begin{equation}\label{eq:lambda}
\lambda^* = \min_{v\in V(G)} \max_{0\le k \le n-1} \frac{F_n(v) - F_k(v)}{n-k}.
\end{equation}

All values of $F_k(v)$ can be computed in $O(nm)$ time by dynamic programming,
with the recurrence
\begin{equation}\label{eq:recurrence}
F_k(v) = \min_{(u,v)\in E(G)} [F_{k-1}(u) + f(u,v)]
\quad\textrm{for } 1 \le k \le n-1,
\end{equation}
and the initial conditions
$$
F_0(s) = 0
\quad\textrm{and}\quad
F_0(v) = \infty
\textrm{ for } v \ne s.
$$
Finally, by~\eqref{eq:lambda},
the value of $\lambda^*$ can be computed in $O(n^2)$ time.
The overall running time is $O(nm)$.

If the actual cycle yielding the minimum cycle mean is desired,
it may be extracted from a minimum-weight edge progression of length $n$
between $s$ and the minimizing vertex $v$ in~\eqref{eq:lambda}.
Such edge progressions for all vertices $v$ can be recorded,
in parallel to the computation of $F_k(v)$ by~\eqref{eq:recurrence},
by recording back references to the vertices $u$ of the minimizing edges
$(u,v)$.

It is easy to verify that Karp's algorithm remains valid for directed graphs
with self-loops.

Now recall Proposition~\ref{prp}.
By applying Karp's algorithm to the directed graph $H_k$ corresponding to the
strip $S_k$, we immediately have the following theorem:

\begin{theorem}
For any $k \ge 1$,
a minimum-density periodic identifying code in $S_k$,
with pattern length at most $2^{4k}$,
can be found by Karp's algorithm in $O(2^{9k})$ time.
\end{theorem}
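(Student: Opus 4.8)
The plan is to apply Karp's algorithm directly to the digraph $H_k$ and translate the output back into a code via Proposition~\ref{prp}. First I would invoke the structural facts about $H_k$ already established above: it is strongly connected, it has $n \le 2^{4k}$ vertices and $m \le 2^{5k}$ edges, and every edge carries an integer weight between $0$ and $k$. Since Karp's algorithm is valid for strongly connected digraphs and, as noted, remains valid in the presence of self-loops, it applies to $H_k$ verbatim, with the source $s$ taken to be any fixed vertex (for instance the one corresponding to the full $4$-bar, whose existence was already used above).

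Next I would run the dynamic program of~\eqref{eq:recurrence} to compute all the quantities $F_k(v)$ and then evaluate~\eqref{eq:lambda} to obtain the minimum cycle mean $\lambda^*$ of $H_k$. The dominant cost is the $O(nm)$ time of the dynamic program, which is $O(2^{4k}\cdot 2^{5k}) = O(2^{9k})$; the subsequent evaluation of~\eqref{eq:lambda} needs only $O(n^2) = O(2^{8k})$ time and is absorbed into the bound. By Proposition~\ref{prp} the minimum density of a periodic identifying code in $S_k$ is exactly $\lambda^*/k$, so this computation already pins down the optimal density within the claimed time budget.

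To produce an actual code, I would extract the minimizing cycle by the back-reference method described just after~\eqref{eq:recurrence}, tracing a minimum-weight length-$n$ edge progression to the minimizing vertex $v$ of~\eqref{eq:lambda}. The remaining point is the pattern-length bound. The minimum cycle mean is always realized by a \emph{simple} cycle: whenever a cycle revisits a vertex it decomposes into two shorter cycles whose means average, weighted by edge count, to the mean of the whole, so at least one of the two has mean no larger; iterating this reduction reaches a simple cycle of at most $n \le 2^{4k}$ edges without increasing the mean. By Proposition~\ref{prp} this simple cycle is the bar pattern of a periodic identifying code with pattern length at most $2^{4k}$ and density $\lambda^*/k$, hence a minimum-density periodic code, as required.

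I do not anticipate a genuine obstacle here: the statement is essentially an assembly of Proposition~\ref{prp}, the already-established size bounds $|V(H_k)| \le 2^{4k}$ and $|E(H_k)| \le 2^{5k}$, and the correctness and running-time guarantees of Karp's algorithm. The only step deserving a moment of care is the pattern-length bound, which rests on the observation that the minimum cycle mean is attained by a simple cycle; this is precisely the same cycle-decomposition argument already exploited in the proof of Theorem~\ref{thm:periodic}, so it can be cited rather than re-derived.
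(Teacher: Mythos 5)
Your proposal is correct and follows exactly the paper's route: the paper proves this theorem simply by applying Karp's algorithm to $H_k$ and invoking Proposition~\ref{prp}, with the running time $O(nm) = O(2^{4k}\cdot 2^{5k}) = O(2^{9k})$ and the pattern-length bound following from the size of $H_k$ and the cycle-decomposition argument already used for Theorem~\ref{thm:periodic}. Your write-up merely makes explicit the details the paper leaves as ``immediate,'' so there is nothing to correct.
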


A straightforward implementation of Karp's algorithm uses $\Theta(n^2)$ space
for all values of $F_k(v)$ and the corresponding back references to $u$.
The space for $F_k(v)$ can be easily reduced to $O(n)$ by running the
dynamic programming algorithm in two passes.
In each pass,
by~\eqref{eq:recurrence},
the values of $\{ F_k(v) \mid v \in V\}$ for each $k \ge 1$
depend only on $\{ F_{k-1}(v) \mid v \in V\}$.
Thus the space for $F_k(v)$ in each round $k$ can be kept for the next round,
while the space for $F_{k-1}(v)$ can be recycled for $F_{k+1}(v)$.
At the end of the first pass, the values of $F_n(v)$ are recorded.
Then in the second pass,
with ready access to $F_n(v)$,
we can compute
$$
\lambda(v) := \max_{0\le k \le n-1} \frac{F_n(v) - F_k(v)}{n - k}
$$
for each $v$ on the fly,
without keeping the values of $F_k(v)$ for all $k$.
Finally, after the second pass, we can compute
$\lambda^* = \min_{v\in V(G)} \lambda(v)$ in $O(n)$ time.
The overall running time remains $O(nm)$.

The $\Theta(n^2)$ space for the back references is harder to reduce.
However, it turns out that in our application,
a rather short cycle yielding the minimum cycle mean
can be found near the end of the minimum-weight edge progression
of length $n$ from $s$ to $v$.
Thus we only need to record back references when $k$ gets close to $n$
in~\eqref{eq:recurrence},
and hence can reduce the total space usage to $O(n + m)$.

Refer to Appendix~\ref{sec:code} for the source code of a computer program
written in the C programming language.
The subscript $k$ of $F_k(v)$ in Karp's algorithm
and the subscript $k$ of $S_k$ have different meanings,
and are denoted by variable \texttt{k} and symbolic constant \texttt{K},
respectively, in the program.
The values of $\{ F_k(v) \mid v \in V\}$
and $\{ F_{k-1}(v) \mid v \in V\}$
are recycled in the two arrays \texttt{d0} and \texttt{d1},
while the values of $\{ F_n(v) \mid v \in V\}$
are recorded in the array \texttt{dn}.
To avoid numerical inaccuracy of floating point arithmetics,
the numerator and denominator of each $\lambda(v)$
are recorded separately, in the two \texttt{int} arrays
\texttt{dd} and \texttt{nk}.
The back references for each vertex $v$ are recorded in the array \texttt{pp}
in \texttt{struct vertex}.

The program has been tested on a laptop computer with modest computing power:
MacBook Air (13-inch, Mid 2011); 1.8 GHz Intel Core i7 processor;
4 GB 1333 MHz DDR3 memory.
With \texttt{K} set to $2$ and $3$, respectively,
the program runs for less than one second,
confirms the known results of
$d^*(S_2) = 3/7$ and $d^*(S_3) = 7/18$,
and rediscovers the minimum-density periodic identifying codes
in Figure~\ref{fig:s2s3}.
With \texttt{K} set to $4$,
the program runs for about three minutes,
determines
$d^*(S_4) = 11/28$,
and finds the minimum-density periodic identifying code
in Figure~\ref{fig:s4s5} left.
With \texttt{K} set to $5$,
the program runs for about $45$ hours,
determines
$d^*(S_5) = 19/50$,
and finds the minimum-density periodic identifying code
in Figure~\ref{fig:s4s5} right.

\appendix

\section{Source code of a computer program}\label{sec:code}

\lstset{language=C, basicstyle=\footnotesize\ttfamily, keywordstyle=\ttfamily,
	showstringspaces=false, tabsize=4 }
\lstinputlisting{cycle.c}

\end{document}